\newtheorem{theorem}{Theorem}[section]
\newtheorem{proposition}[theorem]{Proposition}
\theoremstyle{remark}
\newtheorem{remark}[theorem]{Remark}
\numberwithin{equation}{section}
\begin{document}

\title[semi-infinite $q$-boson system]
{The semi-infinite $q$-boson system\\ with boundary interaction}

\author{J.F.  van Diejen}

\author{E. Emsiz}

\address{
Facultad de Matem\'aticas, Pontificia Universidad Cat\'olica de Chile,
Casilla 306, Correo 22, Santiago, Chile}
\email{diejen@mat.puc.cl, eemsiz@mat.puc.cl}


\thanks{Work was supported in part by the {\em Fondo Nacional de Desarrollo
Cient\'{\i}fico y Tecnol\'ogico (FONDECYT)} Grants \# 1130226 and  \# 11100315,
and by the {\em Anillo ACT56 `Reticulados y Simetr\'{\i}as'}
financed by the  {\em Comisi\'on Nacional de Investigaci\'on
Cient\'{\i}fica y Tecnol\'ogica (CONICYT)}}

\date{May 2013}

\begin{abstract}
Upon introducing a one-parameter quadratic deformation of the $q$-boson algebra and a diagonal perturbation at the end point, we arrive at a semi-infinite $q$-boson system with a two-parameter boundary interaction.
The eigenfunctions are shown to be given by Macdonald's hyperoctahedral Hall-Littlewood functions of type $BC$. It follows that the $n$-particle
spectrum is bounded and absolutely continuous and that the corresponding scattering matrix factorizes as a product of two-particle bulk and one-particle boundary scattering matrices.
\end{abstract}

\maketitle

\section{Introduction}\label{sec1}
The $q$-boson system \cite{bog-ize-kit:correlation} is a lattice discretization of the one-dimensional quantum nonlinear Schr\"odinger equation \cite{gau:fonction,gut:quantum,kor-bog-ize:quantum,mat:many-body,sut:beautiful} built of particle creation and annihilation operators representing the $q$-oscillator algebra
 \cite[\text{Ch.}~5]{kli-sch:quantum}. Its $n$-particle eigenfunctions are given by Hall-Littlewood functions \cite{tsi:quantum,kor:cylindric,die-ems:diagonalization}.
In the present paper we study a system of $q$-bosons on the semi-infinite lattice with boundary interactions, in the spirit of previous works concerned with the quantum nonlinear Schr\"odinger equation on the half-line
\cite{gau:boundary,gat-lig-min:nonlinear,hal-lan:exact,cau-cra:exact,tra-wid:bose}.

Specifically, by introducing at the end point creation and annihilation operators representing a quadratic deformation of the $q$-oscillator algebra together with a diagonal perturbation, we arrive at the hamiltonian of a $q$-boson system on the semi-infinite integer lattice endowed with a two-parameter boundary interaction. By means of an explicit formula for the action of the hamiltonian in the $n$-particle subspace, it is deduced that the Bethe Ansatz eigenfunctions are given by Macdonald's three-parameter Hall-Littlewood functions with hyperoctahedral symmetry associated with the $BC$-type root system \cite[\S10]{mac:orthogonal}.

It follows that the $q$-boson system fits within a large class of discrete quantum models with bounded absolutely continous spectrum for which the scattering behaviour was determined in great detail by means of stationary phase techniques \cite{rui:factorized,die:scattering}.
In particular, the $n$-particle scattering matrix is seen to factorize as a product of explicitly computed two-particle bulk and one-particle boundary scattering matrices.

\section{Semi-infinite $q$-boson system}\label{sec2}
Let
\begin{equation}\label{AFock}
\mathcal{F}:=\bigoplus_{n\in\mathbb{N}} \mathcal{F}(\Lambda_n)
\end{equation}
 denote the algebraic Fock space consisting of finite linear combinations of $f_n\in\mathcal{F}(\Lambda_n)$, $n\in\mathbb{N}:=\{0,1,2,\ldots \}$,
where
$\mathcal{F}(\Lambda_n)$ stands for the space of functions  $f:\Lambda_n\to\mathbb{C}$
on the set of partitions of length at most $n$:
\begin{equation}\label{dominant}
\Lambda_n:=\{\lambda=(\lambda_1,\dots, \lambda_n)\in\mathbb{N}^n \mid \lambda_1\geq \lambda_2 \geq \cdots \geq \lambda_n\} ,
\end{equation}
with the additional convention that $\Lambda_0:=\{ 0\}$ and $\mathcal{F}(\Lambda_0):=\mathbb{C}$.
For $l\in\mathbb{N}$, we introduce the following actions on $f\in\mathcal{F}(\Lambda_n)\subset \mathcal{F}$:
\begin{equation*}
(\beta_l f)(\lambda):=
  f(\beta_l^*\lambda)
 \qquad\qquad\  (\lambda\in \Lambda_{n-1})
\end{equation*}
if $n>0$ and $\beta_l f:=0$ if $n=0$,
\begin{equation*}
    \begin{split}
(\beta^*_l f)(\lambda)&:=
\begin{cases}
[m_l(\lambda)](1-c\delta_{l}q^{m_0(\lambda)-1})f(\beta_l\lambda)&\text{if}\  m_l(\lambda)>0 \\
0&\text{otherwise}
\end{cases}
 \quad ( \lambda\in \Lambda_{n+1}),\\
(q^{N_l+k} f)(\lambda)&:=q^{m_l(\lambda)+k}f(\lambda) \qquad\quad (\lambda\in \Lambda_n),
    \end{split}
\end{equation*}
with $q,c\in\mathbb{R}$ such that $|q|\neq 0,1$ and $k\in\mathbb{Z}$. Here
$$
\delta_l:= \begin{cases} 1 & \text{for}\ l=0,\\
0 & \text{otherwise}
\end{cases},
\qquad
[m]:=\frac{1-q^{m}}{1-q}=\begin{cases} 0&\text{for}\ m=0\\
1+q+\cdots +q^{m-1}&\text{for}\ m>0
\end{cases},$$
and the multiplicity $m_l(\lambda)$ counts the number of parts
$\lambda_j$, $1\leq j\leq n$ of size $\lambda_j=l$ (so
$m_0(\lambda)$, $\lambda\in\Lambda_n$ is equal to $n$ minus the number of nonzero parts), while
$\beta^*_l\lambda\in\Lambda_{n+1}$ and $\beta_l\lambda\in \Lambda_{n-1}$ stand for the partitions obtained from $\lambda\in\Lambda_n$ by inserting/deleting a part of size $l$, respectively (where it is assumed in the latter situation  that $m_l(\lambda)>0$). It is clear from these definitions that $\beta_l$, $\beta_l^*$ and
$q^{N_l+k}$ map $\mathcal{F}(\Lambda_n)$ into $ \mathcal{F}(\Lambda_{n-1})$,  $ \mathcal{F}(\Lambda_{n+1})$ and  $ \mathcal{F}(\Lambda_{n})$, respectively (with the convention that  $ \mathcal{F}(\Lambda_{-1})$ is the null space).

The operators in question represent a quadratic deformation of the $q$-boson field algebra
at the boundary site $l=0$ parametrized by the constant $c$:
\begin{subequations}
\begin{align}
\beta_l q^{N_l} = q^{N_l+1}\beta_l,\ & \
\beta_l^*q^{N_l} =q^{N_l-1} \beta_l^*  ,\nonumber \\
 \beta_l\beta_l^*=[N_l+1](1-c\delta_{l}q^{N_0}),\ & \ [\beta_l,\beta_l^*]_q =1-c\delta_{l}q^{2N_0}
\end{align}
and preserving the ultralocality:
 \begin{equation}
[\beta_l,\beta_k]=[\beta^*_l,\beta^*_k]=[N_l,N_k]=[N_l,\beta_k]=[N_l,\beta^*_k]=[\beta_l,\beta^*_k]= 0
\end{equation}
\end{subequations}
for $l\neq k$ (where $[A,B]:=AB-BA$, $[A,B]_q:=AB-qBA$, and $[N_l+r]:=(1-q^{N_l+r})/(1-q)$).

When interpreting the characteristic function $|\lambda\rangle\in\mathcal{F}(\Lambda_n)$ supported on $\lambda\in\Lambda_n$ as a state representing a configuration of $n$ particles on $\mathbb{N}$ such that  $m_l(\lambda)$ particles are siting on the site $l\in\mathbb{N}$, it is clear that the operators $\beta_l$ and $\beta_l^*$ act as particle  annihilation and creation operators:
\begin{equation*}
\beta_l |\lambda\rangle = \begin{cases}
|\beta_l\lambda\rangle &\text{if}\ m_l(\lambda)>0 \\
0&\text{otherwise}
\end{cases},\quad
\beta_l^* |\lambda\rangle =[m_l(\lambda)+1](1-c\delta_{l}q^{m_0(\lambda)}) | \beta_l^*\lambda\rangle,
\end{equation*}
while $q^{N_l}$ counts the number of particles at the site $l$ (as a power of $q$):
$$q^{N_l}|\lambda\rangle=q^{m_l(\lambda)}|\lambda\rangle .$$

The dynamics of our  $q$-boson system is governed by a hamiltonian built of left and right hopping operators together with a diagonal boundary term:
\begin{equation}\label{qbH}
\text{H}_q =a[N_0] +\sum_{l\in \mathbb{N}} (\beta_{l+1}\beta^*_l+\beta_{l+1}^*\beta_l ),
\end{equation}
$a\in\mathbb{R}$.
This hamiltonian constitutes a well-defined operator on $\mathcal{F}$ \eqref{AFock} as for any $f\in\mathcal{F}(\Lambda_n)$ and $\lambda\in\Lambda_n$ the
infinite sum $(\text{H}_q f)(\lambda)$ contains only a finite number of nonvanishing terms.

\section{The $n$-Particle hamiltonian and its eigenfunctions}\label{sec3}
By construction $\text{H}_q$ \eqref{qbH}
preserves the $n$-particle subspace $\mathcal{F}(\Lambda_n)$.
The following proposition describes the action of the hamiltonian in this subspace explicitly.

\begin{proposition}[$n$-Particle hamiltonian]\label{action-Hq:prp} For any
 $f\in \mathcal{F}(\Lambda_n)$ and $\lambda\in\Lambda_n$, one has that
\begin{align*}
&(\text{H}_q f)(\lambda)
=  a[m_0(\lambda)] f(\lambda)\ +\\
&\sum_{\substack{1\leq j \leq n\\ \lambda+e_j\in\Lambda_n}} (1-c\delta_{\lambda_j}q^{m_0(\lambda)-1})[m_{\lambda_j}(\lambda)] f(\lambda+e_j)
+\sum_{\substack{1\leq j \leq n\\ \lambda-e_j\in\Lambda_n}} [m_{\lambda_j}(\lambda)]  f(\lambda-e_j) ,
\end{align*}
where $e_1,\ldots ,e_n$ refer to the unit vectors comprising the standard basis of $\mathbb{Z}^n$.
\end{proposition}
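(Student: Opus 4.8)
The plan is to evaluate each of the three kinds of terms in $\text{H}_q$ \eqref{qbH} separately on $f\in\mathcal{F}(\Lambda_n)$ at a point $\lambda\in\Lambda_n$, and to match the outcome term by term with the asserted formula. The diagonal piece is immediate: by definition $(a[N_0]f)(\lambda)=a[m_0(\lambda)]f(\lambda)$, which is the first term. The substance of the proof lies in the two hopping sums $\sum_l\beta_{l+1}\beta_l^*$ and $\sum_l\beta_{l+1}^*\beta_l$, and the main work is purely combinatorial bookkeeping of how the multiplicities $m_l$ and $m_0$ transform under the elementary insertion and deletion maps $\beta_l^*$ and $\beta_l$.

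For the right-hopping sum I would first observe that $(\beta_{l+1}\beta_l^*f)(\lambda)=(\beta_l^*f)(\beta_{l+1}^*\lambda)$. Writing $\mu:=\beta_{l+1}^*\lambda\in\Lambda_{n+1}$, the key elementary facts are that inserting a part of size $l+1\geq 1$ leaves both the count of parts of size $l$ and the number of zero parts unchanged, i.e. $m_l(\mu)=m_l(\lambda)$ and $m_0(\mu)=m_0(\lambda)$, and that $\beta_l\mu=\lambda+e_{j}$, where $j$ is the topmost index of value $l$ in $\lambda$ (deleting an $l$ while keeping the inserted $l+1$ yields a partition with one fewer $l$ and one more $l+1$ than $\lambda$). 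Feeding this into the definition of $\beta_l^*$ gives, for each $l$ with $m_l(\lambda)>0$, the contribution $(1-c\delta_l q^{m_0(\lambda)-1})[m_l(\lambda)]f(\lambda+e_j)$, while terms with $m_l(\lambda)=0$ vanish since then $m_l(\mu)=0$ forces $(\beta_l^*f)(\mu)=0$. The left-hopping sum is handled symmetrically: $(\beta_{l+1}^*\beta_l f)(\lambda)=[m_{l+1}(\lambda)](\beta_l f)(\beta_{l+1}\lambda)$, the boundary factor being absent because $\delta_{l+1}=0$, and with $\nu:=\beta_{l+1}\lambda$ one has $\beta_l^*\nu=\lambda-e_j$, where now $j$ is the bottommost index of value $l+1$ in $\lambda$, producing the contribution $[m_{l+1}(\lambda)]f(\lambda-e_j)$.

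The step requiring most care is the reindexing of the two sums from the site label $l\in\mathbb{N}$ to the index $1\leq j\leq n$ of the statement. Here the crucial point is the dictionary between admissible moves and admissible indices: $\lambda+e_j\in\Lambda_n$ holds precisely when $j$ is the \emph{topmost} index carrying the value $\lambda_j$ (so that $\lambda_{j-1}>\lambda_j$), and $\lambda-e_j\in\Lambda_n$ holds precisely when $\lambda_j\geq 1$ and $j$ is the \emph{bottommost} index carrying the value $\lambda_j$ (so that $\lambda_{j+1}<\lambda_j$). Consequently $l\mapsto(\text{topmost index of value }l)$ is a bijection from the set of values occurring in $\lambda$ onto $\{j:\lambda+e_j\in\Lambda_n\}$, and $l\mapsto(\text{bottommost index of value }l+1)$ is a bijection from $\{l:m_{l+1}(\lambda)>0\}$ onto $\{j:\lambda-e_j\in\Lambda_n\}$. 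Under these bijections $\lambda_j=l$ (respectively $\lambda_j=l+1$), whence $m_l(\lambda)=m_{\lambda_j}(\lambda)$ and $\delta_l=\delta_{\lambda_j}$, and the two hopping contributions become exactly the second and third sums in the proposition.

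Finally I would note that the whole computation is justified termwise because, for fixed $\lambda$, only the finitely many sites $l$ with $m_l(\lambda)>0$ (respectively $m_{l+1}(\lambda)>0$) contribute, which is the finiteness already recorded for $\text{H}_q$ after \eqref{qbH}. I expect no analytic subtleties; the only genuine obstacle is the correct matching of the boundary factor $(1-c\delta_l q^{m_0(\lambda)-1})$, which survives only in the right-hopping sum and only at $l=0$, that is, exactly where $\lambda_j=0$.
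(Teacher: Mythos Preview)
Your proposal is correct and follows essentially the same route as the paper's own proof: compute the diagonal term directly, unwind $(\beta_{l+1}\beta_l^* f)(\lambda)$ and $(\beta_{l+1}^*\beta_l f)(\lambda)$ via the definitions to obtain $f(\lambda\pm e_j)$ with the stated coefficients, and then reindex from the site label $l$ to the admissible index $j$. Your version is merely more explicit about the bijection between occupied sites and admissible indices (topmost/bottommost index of a given value), which the paper's proof leaves implicit.
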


\begin{proof}
It is clear from the definitions that
$([N_0]f)(\lambda)=[m_0(\lambda)]f(\lambda)$, and that for any $l\in\mathbb{N}$:
$$
(\beta_{l+1} \beta_l^* f)(\lambda)=
\begin{cases}
[m_l(\lambda)](1-c\delta_lq^{m_0(\lambda)-1})f(\beta_{l+1}^* \beta_l\lambda) &\text{if}\  m_l(\lambda)>0 ,\\
0&\text{otherwise},
\end{cases}$$
where $\beta_{l+1}^* \beta_l\lambda=\lambda+e_j$ with $j=\min\{k\mid \lambda_k=l\}$ (so $l=\lambda_j$),
and
$$(\beta_{l+1}^*\beta_l f)(\lambda)=
\begin{cases}
[m_{l+1}(\lambda)]f(\beta_{l+1}\beta_l^*\lambda)&\text{if}\  m_{l+1}(\lambda)>0 ,\\
0&\text{otherwise},
\end{cases}$$
where $\beta_{l+1}\beta_l^*\lambda=\lambda-e_j$ with $j=\max\{k\mid \lambda_k=l+1\}$ (so $l=\lambda_j-1$).
\end{proof}

The $n$-particle hamiltonian has Bethe Ansatz eigenfunctions given by the following plane wave expansion
\begin{subequations}
\begin{equation}\label{HLf}
\phi_\xi (\lambda) :=
\sum_{\substack{\sigma\in S_n\\ \epsilon\in \{\pm 1\}^n}}
C(\epsilon \xi_{\sigma}) e^{i \langle \lambda, \epsilon \xi_{\sigma}\rangle} ,
\end{equation}
with expansion coefficients of the form
\begin{eqnarray}\label{Cf}
\lefteqn{C(\xi) :=
\prod_{1\leq j\leq n} \frac{1-ae^{-i\xi_j}+ce^{-2i\xi_j}}{1-e^{-2i\xi_j}}} && \\
&& \times \prod_{1\leq j<k \leq n} \Bigl(\frac{1-q e^{-i(\xi_{j}-\xi_k)}}{1-e^{-i(\xi_{j}-\xi_k)}}\Bigr)\Bigl(  \frac{1-q e^{-i(\xi_{j}+\xi_k)}}{1-e^{-i(\xi_{j}+\xi_k)}} \Bigr) . \nonumber
\end{eqnarray}
\end{subequations}
Here $\langle\cdot,\cdot\rangle$ denotes the standard inner product on $\mathbb{R}^n$,
$\epsilon\xi_\sigma:=(\epsilon_1\xi_{\sigma_1},\epsilon_2\xi_{\sigma_2},\ldots ,\epsilon_n\xi_{\sigma_n})$,
and the summation is meant over all permutations $\sigma$ in the symmetric group $S_n$ and all sign configurations
$\epsilon=(\epsilon_1,\ldots,\epsilon_n)\in \{ 1,-1\}^n$.
Viewed as a function of the spectral parameter $\xi=(\xi_1,\ldots ,\xi_n)$ in the fundamental alcove
\begin{equation}\label{alcove}
 A:=\{ (\xi_1,\xi_2,\ldots,\xi_n)\in\mathbb{R}^n\mid \pi>\xi_1>\xi_2>\cdots >\xi_n>0\} ,
\end{equation}
the expression $\phi_\xi(\lambda)$, $\lambda\in\Lambda_n$ amounts to Macdonald's three-parameter Hall-Littlewood polynomial with hyperoctahedral symmetry associated with the root system $BC_n$ \cite[\S10]{mac:orthogonal}.

\begin{proposition}[Bethe Ansatz eigenfunctions]\label{BAE:prp}
The $n$-particle Bethe Ansatz wave function $\phi_\xi$, $\xi\in A$ solves the eigenvalue equation
\begin{equation}\label{ev:eq}
\text{H}_q \phi_\xi = E_n(\xi) \phi_\xi,\qquad E_n(\xi):=2\sum_{j=1}^n \cos (\xi_j).
\end{equation}
\end{proposition}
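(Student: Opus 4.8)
The plan is to substitute the plane-wave expansion \eqref{HLf} into the eigenvalue equation \eqref{ev:eq} and to verify it pointwise on each $\lambda\in\Lambda_n$, using the explicit $n$-particle action of Proposition \ref{action-Hq:prp}. Writing $\eta:=\epsilon\xi_\sigma$ for the points of the $W$-orbit of $\xi$, with $W=S_n\ltimes\{\pm1\}^n$ the hyperoctahedral group, the dispersion is orbit-invariant: $E_n(\xi)=\sum_{j=1}^n(e^{i\eta_j}+e^{-i\eta_j})$ for every such $\eta$. Hence for $\lambda$ in the interior, i.e. strictly decreasing with $\lambda_n\geq1$, every neighbour $\lambda\pm e_j$ again lies in $\Lambda_n$, all multiplicities equal $[1]=1$, the boundary term $a[m_0(\lambda)]=a[0]=0$, and the deformation factor $1-c\delta_{\lambda_j}q^{m_0(\lambda)-1}$ is trivial; the equation then holds term by term in $\eta$ by free dispersion. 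All of the content therefore sits on the walls.

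Two families of walls occur: the bulk walls $\lambda_j=\lambda_{j+1}$ and the boundary wall $\lambda_n=0$. The key observation is a degeneracy of the exponentials there. If $\lambda_j=\lambda_{j+1}$ then $\langle\lambda,s_j\eta\rangle=\langle s_j\lambda,\eta\rangle=\langle\lambda,\eta\rangle$ for the adjacent transposition $s_j$, so $e^{i\langle\lambda,\eta\rangle}=e^{i\langle\lambda,s_j\eta\rangle}$; likewise, if $\lambda_n=0$ then $e^{i\langle\lambda,\eta\rangle}=e^{i\langle\lambda,s_0\eta\rangle}$ for the sign change $s_0:\eta_n\mapsto-\eta_n$. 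Consequently the defect $(\text{H}_q\phi_\xi-E_n\phi_\xi)(\lambda)$, expanded in monomials $e^{i\langle\lambda,\eta\rangle}$, pools the coefficients $C(\eta)$ and $C(s\eta)$ for the relevant generator $s$, and its vanishing reduces to a two-term relation for the ratio $C(s\eta)/C(\eta)$. For an isolated coincidence $\lambda_j=\lambda_{j+1}=l\geq1$ the action of Proposition \ref{action-Hq:prp} contributes $(1+q)\bigl[\phi_\xi(\lambda+e_j)+\phi_\xi(\lambda-e_{j+1})\bigr]$ in the coordinates $j,j+1$, and matching this against the $(j,j+1)$-part of $E_n\phi_\xi(\lambda)$ forces, with $z:=e^{i\eta_j}$ and $w:=e^{i\eta_{j+1}}$,
\begin{equation*}
\frac{C(s_j\eta)}{C(\eta)}=\frac{qz-w}{z-qw}.
\end{equation*}
For a single part at the origin, $\lambda_n=0$ with $m_0(\lambda)=1$, the action contributes $a\phi_\xi(\lambda)+(1-c)\phi_\xi(\lambda+e_n)$, and matching against the $n$-th part of $E_n\phi_\xi(\lambda)$ forces, with $u:=e^{i\eta_n}$,
\begin{equation*}
\frac{C(s_0\eta)}{C(\eta)}=-\frac{cu^2-au+1}{u^2-au+c}.
\end{equation*}

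Both required ratios then follow by inspection from the product formula \eqref{Cf}. Under $s_j$ only the single difference factor $\tfrac{1-qe^{-i(\xi_j-\xi_{j+1})}}{1-e^{-i(\xi_j-\xi_{j+1})}}$ changes, since the symmetric sum factor, the boundary factors, and every factor coupling a third index are either invariant or permuted in pairs; its $s_j$-ratio is exactly $(qz-w)/(z-qw)$. Under $s_0$ the difference and sum factors attached to $\xi_n$ swap, so their product is invariant, and only the single-variable factor $\tfrac{1-ae^{-i\xi_n}+ce^{-2i\xi_n}}{1-e^{-2i\xi_n}}$ changes, with $s_0$-ratio exactly $-(cu^2-au+1)/(u^2-au+c)$. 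Thus the two basic bulk and boundary relations hold.

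The hard part will be promoting these two local, two-term identities to the vanishing of the full defect at higher walls: blocks $\lambda_j=\cdots=\lambda_{j+m-1}$ of multiplicity $m\geq2$, where the $q$-integer $[m]=1+q+\cdots+q^{m-1}$ must emerge, and especially a block sitting at the origin, where bulk and boundary effects are superposed. The plan is to group the expanded defect according to the stabiliser $W_\lambda\subset W$ of $\lambda$, which is generated precisely by the simple reflections $s_j$ along the coinciding blocks together with $s_0$ when $\lambda_n=0$, and to show that the $W_\lambda$-sum telescopes to zero. This works because the $C$-ratio for a product of generators factors as the product of the elementary ratios above; that cocycle property is in turn a consequence of the consistency (Yang--Baxter and reflection-equation) identities satisfied by the elementary factors of \eqref{Cf}, and it is exactly the $q$-symmetrization underlying Hall--Littlewood functions that regenerates the multiplicities $[m_{\lambda_j}(\lambda)]$. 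Carrying out this combinatorial bookkeeping cleanly, in particular checking that the boundary deformation $1-c\,q^{m_0(\lambda)-1}$ assembles correctly when several parts accumulate at $0$, is the main obstacle and where essentially all of the work lies.
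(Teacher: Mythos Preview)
Your approach differs substantially from the paper's. The paper does not verify the eigenvalue equation by direct Bethe Ansatz computation; instead it identifies $\phi_\xi(\lambda)$ with Macdonald's hyperoctahedral Hall--Littlewood polynomial and observes that the eigenvalue equation, via Proposition~\ref{action-Hq:prp}, is literally the Pieri formula~\eqref{pieri2} of Appendix~\ref{appA}. That Pieri formula is then imported by degeneration from the known Pieri formulas for $BC_n$-type Macdonald--Koornwinder polynomials in \cite{die:properties}. This is short but leans on a nontrivial external result. Your coordinate Bethe Ansatz verification is more self-contained and exposes the scattering structure directly: the two elementary $C$-ratio relations you extract at a simple bulk wall and at a simple boundary wall are correct and match the factors in~\eqref{Cf}.

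That said, your proposal is not yet a proof, and you say so yourself. The interior case and the isolated-wall cases are the easy part; the substance is the higher-wall analysis you defer. At a bulk block of multiplicity $m$ the relevant stabiliser is $S_m$ and the defect is an $m!$-term sum, whose vanishing is a genuine $q$-symmetrization identity (this is exactly what produces $[m]$). At a block of size $m$ sitting at the origin the stabiliser is the full hyperoctahedral group of rank $m$, and the cancellation must simultaneously account for $a[m]$, the factor $[m]$, and the deformation $1-cq^{m-1}$; the cocycle property of the $C$-ratios organises the terms but does not by itself collapse the $2^m m!$ summands. Until that combinatorics is actually carried out, the argument is incomplete. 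In effect your route and the paper's converge on the same missing lemma: both reduce to the Pieri identity~\eqref{pieri2}, which the paper obtains from the literature and which you would have to prove from scratch.
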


\begin{proof}
It follows from Proposition \ref{action-Hq:prp} that
the stated eigenvalue equation boils down to the following identity
\begin{align*}
 a[m_0(\lambda)] \phi_\xi(\lambda)\  &+\
\sum_{\substack{1\leq j \leq n\\ \lambda+e_j\in\Lambda_n}} (1-c\delta_{\lambda_j}q^{m_0(\lambda)-1})[m_{\lambda_j}(\lambda)] \phi_\xi (\lambda+e_j) \\
&+\sum_{\substack{1\leq j \leq n\\ \lambda-e_j\in\Lambda_n}} [m_{\lambda_j}(\lambda)] \phi_\xi(\lambda-e_j) =
2\phi_\xi(\lambda ) \sum_{j=1}^n \cos (\xi_j) ,
\end{align*}
which is in turn equivalent to the Pieri formula  for the hyperoctahedral Hall-Littlewood function in Eq. \eqref{pieri2} of Appendix \ref{appA}.
\end{proof}

\section{Diagonalization}\label{sec4}
From now on it will be assumed unless stated otherwise that $0<|q|<1$ and that the boundary parameters
$a$ and $c$  are chosen such that the roots $r_1$, $r_2$ of the quadratic polynomial $r^2-ar+c$ belong to the interval $(-1,1)$:
\begin{equation}\label{od}
a=r_1+r_2\ \text{and}\  c=r_1r_2\ \text{with}\ r_1,r_2\in (-1,1) .
\end{equation}

Let $ L^2(A,\Delta\text{d}\xi)$ be the Hilbert space of functions $\hat{f}:A\to\mathbb{C}$ characterized by the inner product
\begin{equation}\label{ip}
\langle \hat{f},\hat{g}\rangle_\Delta=\frac{1}{(2\pi)^n}\int_A \hat{f}(\xi)\overline{\hat{g}(\xi)}\Delta(\xi)\text{d}\xi ,
\quad \text{where}\quad \Delta (\xi):= \frac{1}{|C(\xi)|^2}
\end{equation}
with $C(\xi)$ given by Eq. \eqref{Cf}.
It is well-known that for the parameter regime in question
Macdonald's hyperoctahedral Hall-Littlewood functions form an orthogonal basis
of $ L^2(A,\Delta\text{d}\xi)$ \cite[\S 10]{mac:orthogonal}:
\begin{subequations}
\begin{equation}\label{orthogonality}
\langle \phi(\lambda) ,\phi (\mu) \rangle_\Delta =\begin{cases}
\mathcal{N}(\lambda)&\text{if}\ \lambda =\mu ,\\
0&\text{otherwise},
\end{cases}
\end{equation}
where
\begin{equation}\label{norm}
\mathcal{N}(\lambda):= (c;q)_{m_0(\lambda)} \prod_{\ell\in\mathbb{N}} [m_\ell (\lambda)]!
\end{equation}
\end{subequations}
with $(c;q)_m:=(1-c)(1-cq)\cdots (1-cq^{m-1})$ (and the convention that $(c;q)_0:=1$) and
$[m]!:=(q;q)_m/(q;q)_1^m=[m][m-1]\cdots [2][1]$.
By combining the orthogonality in Eqs. \eqref{orthogonality}, \eqref{norm} with Proposition \ref{BAE:prp},
the spectral decomposition of $\text{H}_{q}$ in the $n$-particle Hilbert space
$\ell^2(\Lambda_n,\mathcal{N}^{-1})\subset\mathcal{F}(\Lambda_n)$ characterized by the inner product
 \begin{equation}
 \langle f,g\rangle_n:=\sum_{\lambda\in\Lambda_n} f(\lambda) \overline{g(\lambda)} \mathcal{N}^{-1}(\lambda)
 \end{equation}
becomes immediate.

 \begin{theorem}[Diagonalization]\label{diagonal:thm}
 For $0< |q|<1$ and values of the boundary parameters $a$ and $c$ in the orthogonality domain \eqref{od},
 the $q$-boson Hamiltonian $\text{H}_{q}$ \eqref{qbH} restricts to a bounded self-adjoint operator in $\ell^2(\Lambda_n,\mathcal{N}^{-1})$ with purely absolutely continuous spectrum. More specifically, its
 spectral decomposition reads explicitly
 \begin{equation}\label{s-d}
\text{H}_{q}=\boldsymbol{F_q}^{-1}  \circ \hat{{E}} \circ\boldsymbol{F_q},
 \end{equation}
 where $\boldsymbol{F_q}:\ell^2(\Lambda_n,\mathcal{N}^{-1})\to L^2(A,\Delta\text{d}\xi)$ denotes the unitary Fourier transform associated with the hyperoctahedral Macdonald-Hall-Littlewood basis:
\begin{subequations}
\begin{equation}\label{ft1}
(\boldsymbol{F_q}f)(\xi):= \langle f,\phi_\xi \rangle_n=\sum_{\lambda\in\Lambda_n}f(\lambda)
\overline{\phi_\xi (\lambda)}\mathcal{N}^{-1}(\lambda)
\end{equation}
($f\in \ell^2(\Lambda_n,\mathcal{N}^{-1})$) with the inversion formula given by
\begin{equation}\label{ft2}
(\boldsymbol{F_q}^{-1}\hat{f})(\lambda) = \langle \hat{f},\overline{\phi ( \lambda)}\rangle_\Delta=
\frac{1}{(2\pi)^n}\int_A \hat{f}(\xi) \phi_\xi (\lambda)\Delta(\xi)\text{d}\xi
\end{equation}
\end{subequations}
$(\hat{f}\in L^2(A,\Delta\text{d}\xi))$,
 and
 $(\hat{{E}}\hat{f})(\xi):=E_n(\xi) \hat{f}(\xi)$ stands for the bounded real multiplication operator in
$L^2(A,\Delta\text{d}\xi)$ associated with the $n$-particle eigenvalue $E_n(\xi)$ \eqref{ev:eq}.
\end{theorem}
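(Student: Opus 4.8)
The plan is to exhibit $\text{H}_q$ as unitarily equivalent, through the Fourier transform $\boldsymbol{F_q}$, to the multiplication operator $\hat{E}$, and then to read off boundedness, self-adjointness and the spectral type from the multiplier $E_n(\xi)$. Concretely I would split the argument into three parts: (a) $\boldsymbol{F_q}$ is a unitary isomorphism $\ell^2(\Lambda_n,\mathcal{N}^{-1})\to L^2(A,\Delta\text{d}\xi)$ with inverse \eqref{ft2}; (b) the intertwining relation $\boldsymbol{F_q}\,\text{H}_q=\hat{E}\,\boldsymbol{F_q}$ holds on the dense subspace of finitely supported functions, yielding \eqref{s-d}; and (c) $\hat{E}$ is bounded, self-adjoint and has purely absolutely continuous spectrum, whence the same holds for $\text{H}_q$.

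For part (a) I would use that the indicator functions $|\lambda\rangle$, $\lambda\in\Lambda_n$, form a weighted orthogonal basis of $\ell^2(\Lambda_n,\mathcal{N}^{-1})$ with $\langle|\lambda\rangle,|\mu\rangle\rangle_n=\mathcal{N}^{-1}(\lambda)\delta_{\lambda\mu}$. Evaluating \eqref{ft1} gives $\boldsymbol{F_q}|\mu\rangle=\mathcal{N}^{-1}(\mu)\,\overline{\phi(\mu)}$, where $\overline{\phi(\mu)}$ is the function $\xi\mapsto\overline{\phi_\xi(\mu)}$. Conjugating the orthogonality relations \eqref{orthogonality}–\eqref{norm} (legitimate since $\mathcal{N}$ is real) shows $\langle\overline{\phi(\mu)},\overline{\phi(\nu)}\rangle_\Delta=\mathcal{N}(\mu)\delta_{\mu\nu}$, so $\boldsymbol{F_q}$ carries the orthogonal basis $\{|\mu\rangle\}$ to the orthogonal system $\{\mathcal{N}^{-1}(\mu)\overline{\phi(\mu)}\}$ with matching squared norms $\mathcal{N}^{-1}(\mu)$; since the hyperoctahedral Hall-Littlewood functions form a \emph{basis} of $L^2(A,\Delta\text{d}\xi)$, this image is complete, and $\boldsymbol{F_q}$ extends to a unitary. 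Evaluating \eqref{ft2} on $\overline{\phi(\mu)}$ and invoking \eqref{orthogonality} returns $|\mu\rangle$, confirming that \eqref{ft2} is indeed the inverse.

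For part (b) I would substitute the inversion formula \eqref{ft2} for $f=\boldsymbol{F_q}^{-1}\boldsymbol{F_q}f$ and exploit that, by Proposition~\ref{action-Hq:prp}, $\text{H}_q$ acts on $\mathcal{F}(\Lambda_n)$ with finite range in the lattice variable $\lambda$. For finitely supported $f$ the action $(\text{H}_q f)(\lambda)=\sum_\mu H_{\lambda\mu}f(\mu)$ is a finite sum, so it may be moved through the integral over $A$ to give
$$(\text{H}_q f)(\lambda)=\frac{1}{(2\pi)^n}\int_A (\boldsymbol{F_q}f)(\xi)\,(\text{H}_q\phi_\xi)(\lambda)\,\Delta(\xi)\,\text{d}\xi .$$
Proposition~\ref{BAE:prp} replaces $\text{H}_q\phi_\xi$ by $E_n(\xi)\phi_\xi$, and comparison with \eqref{ft2} identifies the right-hand side with $(\boldsymbol{F_q}^{-1}\hat{E}\boldsymbol{F_q}f)(\lambda)$. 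As the finitely supported functions are dense and $\hat{E}$ is bounded, $\text{H}_q$ extends by continuity to the bounded operator $\boldsymbol{F_q}^{-1}\hat{E}\boldsymbol{F_q}$, which is precisely \eqref{s-d}; self-adjointness then follows from that of the real multiplier $\hat{E}$.

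For part (c) I would observe that $E_n(\xi)=2\sum_j\cos(\xi_j)$ is real with $|E_n(\xi)|\le 2n$, so $\hat{E}$ is bounded with spectrum equal to its essential range $[-2n,2n]$. To obtain pure absolute continuity I would note that on the alcove $A$ one has $\partial E_n/\partial\xi_j=-2\sin(\xi_j)\ne 0$ for every $j$, since $\xi_j\in(0,\pi)$; thus $E_n$ is a submersion with nowhere-vanishing gradient on $A$. By the coarea formula, the pushforward under $E_n$ of any measure $|\hat{f}|^2\Delta\,\text{d}\xi$, which is absolutely continuous with respect to Lebesgue measure on $A$, is absolutely continuous with respect to Lebesgue measure on $\mathbb{R}$; these pushforwards are exactly the spectral measures of $\hat{E}$, so its spectrum, and hence that of $\text{H}_q$, is purely absolutely continuous. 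The substantive analytic inputs (orthogonality/completeness of the Hall-Littlewood basis and the eigenvalue equation) are imported from earlier, so the only delicate points are the interchange of $\text{H}_q$ with the integral in (b)—valid precisely because $\text{H}_q$ has finite range and $f$ is finitely supported—and the measure-theoretic argument in (c), which I expect to be the main obstacle, although the nonvanishing of $\nabla E_n$ on $A$ reduces it to a direct application of the coarea formula rather than any Sard-type critical-value estimate.
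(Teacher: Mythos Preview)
Your proposal is correct and follows essentially the same route as the paper, which states the theorem as an immediate consequence of the orthogonality relations \eqref{orthogonality}--\eqref{norm} together with Proposition~\ref{BAE:prp}; you have simply spelled out in detail the unitarity of $\boldsymbol{F_q}$, the intertwining with $\hat{E}$, and the absolute continuity of the multiplication operator that the paper leaves implicit. The coarea argument in (c) is a clean way to justify the absolute continuity (the weight $\Delta$ is bounded on $A$ in the stated parameter regime, so $\Delta\,\text{d}\xi$ is indeed absolutely continuous with respect to Lebesgue measure), and nothing further is needed.
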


In the Fock space
$
\mathcal{H}:=\bigoplus_{n\geq 0} \ell^2(\Lambda_n,\mathcal{N}^{-1})
$,
built of all linear combinations $\sum_{n\geq 0} c_n f_n$ with $c_n\in\mathbb{C}$ and $f_n\in \ell^2(\Lambda_n,\mathcal{N}^{-1})$ such that $\sum_{n\geq 0} |c_n|^2 \langle f_n,f_n\rangle_n<\infty$,
the $q$-boson hamiltonian $\text{H}_q$ \eqref{qbH} constitutes an unbounded
operator that is essentially self-adjoint on the dense domain $\mathcal{D}:=\mathcal{F}\cap\mathcal{H}$ (because
for $z\in\mathbb{C}\setminus\mathbb{R}$ the range $(\text{H}_{q}-z)\mathcal{D}$ is dense in $\mathcal{H}$
and $\lim_{n\to\infty} \sup_{\xi\in A} |E_n(\xi)|=\infty$).
The representation of the deformed $q$-boson field algebra in Section \ref{sec2} on the other hand gives rise to a bounded representation
on $\mathcal{H}$:
\begin{align*}
\langle \beta_l f,\beta_l f\rangle_{n-1}&\leq  \frac{1+|c|\delta_{l}}{1-q} \langle  f,f\rangle_{n}  ,\nonumber\\
\langle \beta_l^* f,\beta_l^*f\rangle_{n+1}&\leq  \frac{1+|c|\delta_{l}}{1-q}  \langle  f,f\rangle_{n},\\
\langle q^{N_l} f,q^{N_l}f\rangle_{n}&\leq  \langle  f,f\rangle_{n} , \nonumber
\end{align*}
preserving the $*$-structure:
\begin{equation*}
\langle \beta_l^*f,g\rangle_{n+1}= \langle f, \beta_l g\rangle_{n}\quad\text{and}\quad
\langle q^{N_l}f,g\rangle_{n}= \langle f, q^{N_l} g\rangle_{n} .
\end{equation*}

\begin{remark}
Upon rescaling the lattice $\Lambda_n$ \eqref{dominant} and performing an appropriate continuum limit  \cite[Sec.~5]{die:plancherel}, Macdonald's hyperoctahedral Hall-Littlewood functions tend to the eigenfunctions of the quantum nonlinear Schr\"odinger equation on the half-line with a boundary interaction \cite{gau:boundary,gat-lig-min:nonlinear,hal-lan:exact,cau-cra:exact,tra-wid:bose}. In particular, it follows from
 \cite[Sec.~5.3]{die:plancherel} that for $a=0$ (which corresponds to a reduction from type $BC$ to type $C$ root systems) a renormalized version of the $q$-boson hamiltonian $\text{H}_q$ \eqref{qbH}
then converges in the $n$-particle subspace in the strong resolvent sense to a hamiltonian that can be written formally as:
 \begin{equation*}
 -\sum_{j=1}^n \frac{\partial^2}{\partial x_j^2}+
g \sum_{1\leq j< k\leq n} \bigl( \delta (x_j-x_k)+\delta (x_j+x_k)\bigr)
 +g_0 \sum_{1\leq j\leq n}\delta(x_j)
 \end{equation*}
with $g, g_0> 0$ (where $\delta (\cdot )$ stands for the `delta potential').
 \end{remark}

\section{Factorized scattering}
The similarity transformation
\begin{equation}\label{Hq-lebesgue}
H :=\mathcal{N}^{-1/2}\,\text{H}_{q}\,\mathcal{N}^{1/2}
\end{equation}
turns
the $n$-particle $q$-boson hamiltonian in Proposition \ref{action-Hq:prp} into a self-adjoint operator in $\ell^2(\Lambda_n)$ diagonalized by the normalized wave function
\begin{subequations}
\begin{equation}
\begin{split}
 \Psi_\xi(\lambda)&:= e^{\frac{\pi i}{2}n^2}  |C (\xi)|^{-1} \mathcal{N}(\lambda )^{-1/2} \phi_\xi (\lambda ) \\
 &=\mathcal{N}(\lambda )^{-1/2}\sum_{\substack{\sigma \in S_n\\ \epsilon\in \{ \pm 1\}^n}}\text{sign} (\epsilon \sigma )
\hat{\mathcal S}(\epsilon \xi_\sigma )^{1/2} e^{i\langle \rho+\lambda, \epsilon\xi_\sigma\rangle} ,
 \end{split}
\end{equation}
with $\xi\in A$ \eqref{alcove}, $\text{sign} (\epsilon \sigma ):=\epsilon_1\cdots\epsilon_n\text{sign} ( \sigma )$,  $\rho:=(n-1,n-2,\ldots ,2,1,0)$,  and
\begin{equation}
\hat{ {\mathcal S}} (\xi)
 := \prod_{1\leq j<k\leq n} s(\xi_j-\xi_k)s(\xi_j+\xi_k)\prod_{1\leq j\leq n} s_0(\xi_j) ,
\end{equation}
where
\begin{align}
s(x):=\frac{1-qe^{-ix}}{1-qe^{ix}}\quad&\text{with}\quad      s(x)^{1/2}=\frac{1-qe^{-ix}}{|1-q e^{ix}|}
\intertext{and}
s_0(x):=\frac{1-ae^{-ix}+ce^{-2ix}}{1-ae^{ix}+ce^{2ix}}
\quad&\text{with}\quad
 s_0(x)^{1/2}=\frac{1-ae^{-ix}+ce^{-2ix}}{|1-ae^{ix}+ce^{2ix}|}.
\end{align}
\end{subequations}
Specifically, one has that
$
H=\boldsymbol{F}^{-1}  \circ \hat{{E}} \circ\boldsymbol{F}
$
where
$\boldsymbol{F}:\ell^2(\Lambda_n)\to  L^2(A,\text{d}\xi)$ denotes the unitary Fourier transformation
determined by the kernel $\Psi_\xi(\lambda)$ (and
 $\hat{E}$ is now interpreted as a bounded multiplication operator in $  L^2(A,\text{d}\xi)$). For $q,a,c\to 0$ the
$n$-particle $q$-boson hamiltonian  $H$ \eqref{Hq-lebesgue} simplifies to a hamiltonian modeling
 impenetrable bosons on $\mathbb{N}$:
   \begin{equation*}
(H_0 f)(\lambda)
=
\sum_{\substack{1\leq j \leq n\\ \lambda+e_j\in\Lambda_n}} f(\lambda+e_j)
+\sum_{\substack{1\leq j \leq n\\ \lambda-e_j\in\Lambda_n}}   f(\lambda-e_j)
\end{equation*}
($f\in\ell^2(\Lambda_n)$), which is diagonalized by the conventional Fourier transform $\boldsymbol{F_0}:\ell^2(\Lambda_n)\to  L^2(A,\text{d}\xi)$ obtained
from $\boldsymbol{F}$ by setting $\hat{\mathcal{S}}(\xi)\equiv 1$, $\mathcal{N}(\lambda)\equiv 1$.

As a very special case of the results in \cite[\text{Sec.}~4]{die:scattering}, it now follows that
the wave- and scattering operators  comparing the $q$-boson dynamics
\begin{equation}\label{dynamics}
(e^{itH }f)(\lambda)=
\frac{1}{(2\pi)^n}\int_A e^{itE_n (\xi )}\hat{f}(\xi) \Psi_\xi (\lambda)\text{d}\xi\qquad
\hat{f}= \boldsymbol{F} f
\end{equation}
 with the corresponding impenetrable boson dynamics generated by $H_0$ are governed by a unitary $S$-matrix
 $ \hat{\mathcal S} : L^2(A,\text{d}\xi)\to  L^2(A,\text{d}\xi)$
 of the form
 \begin{equation}
 ( \hat{\mathcal S}\hat{f})(\xi):=  \hat{\mathcal S}(\epsilon_\xi \xi_{\sigma_\xi})\hat{f}(\xi)\qquad
 (\hat{f}\in C_0(A_r).
  \end{equation}
 Here $C_0(A_r)$ denotes the dense subspace of $L^2(A,\text{d}\xi)$ consisting of smooth test functions with compact support in
 the open dense subset $A_r\subset A$ for which the components of
 $\nabla E_n(\xi)=(-2\sin(\xi_1),\ldots,-2\sin(\xi_n))$ do not vanish and are all distinct in absolute value, and
  the sign-configuration $\epsilon_\xi$ and the permutation $\sigma_\xi$ are such that the components
  of  $\nabla E_n(\epsilon_\xi \xi_{\sigma_\xi})$ are all positive and ordered from large to small.
 Specifically, by comparing the large-time asymptotics of oscillatory integrals of the form in Eq. \eqref{dynamics} for the dynamics generated by $H$ and $H_0$ one concludes that \cite[\text{Thm.}~4.2 \text{and} \text{Cor.}~4.3]{die:scattering}:

\begin{theorem}[Wave and scattering operators]\label{scattering:thm}
  The operator limits
\begin{subequations}
\begin{equation}
\Omega^{\pm} :=s-\lim_{t\to \pm \infty}  e^{i t  H}e^{-it H_{0}}
\end{equation}
converge in the strong $\ell^2(\Lambda_n)$-norm topology and the corresponding wave operators $\Omega^\pm_r$ are given by unitary operators in $\ell^2(\Lambda_n)$ of the form
\begin{equation}
\Omega_r^\pm = \boldsymbol{F}^{-1} \circ \hat{\mathcal S}^{\mp 1/2}  \circ \boldsymbol{F_0}.
\end{equation}
Hence, the scattering operator comparing the dynamics of $H$  and $H_{0}$ is given by the unitary operator
\begin{equation}
\mathcal{S}:=(\Omega_r^+)^{-1} \Omega_r^- =  \boldsymbol{F_0}^{-1}  \circ \hat{\mathcal S} \circ  \boldsymbol{F_0} .
\end{equation}
\end{subequations}
\end{theorem}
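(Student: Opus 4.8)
The plan is to deduce the theorem from the general stationary-phase scattering theory of \cite[\text{Sec.}~4]{die:scattering} by checking that the present system meets its hypotheses. The structural data are already assembled: $H$ \eqref{Hq-lebesgue} and the impenetrable-boson comparison operator $H_0$ are bounded self-adjoint operators on $\ell^2(\Lambda_n)$ with the same absolutely continuous spectrum, diagonalized by the unitary transforms $\boldsymbol{F}$ and $\boldsymbol{F_0}$ whose kernels $\Psi_\xi(\lambda)$ and its free counterpart $\Psi^0_\xi(\lambda)$ (obtained from $\Psi_\xi$ by setting $\hat{\mathcal S}\equiv 1$, $\mathcal N\equiv 1$) carry identical plane-wave phases $e^{i\langle\rho+\lambda,\epsilon\xi_\sigma\rangle}$ and differ only through the unimodular amplitude $\hat{\mathcal S}(\epsilon\xi_\sigma)^{1/2}$ and the weight $\mathcal N(\lambda)^{-1/2}$. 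Since $\hat{\mathcal S}$ is a product of the unit-modulus factors $s(\xi_j-\xi_k)$, $s(\xi_j+\xi_k)$ and $s_0(\xi_j)$, the prospective wave operators $\boldsymbol{F}^{-1}\circ\hat{\mathcal S}^{\mp 1/2}\circ\boldsymbol{F_0}$ are automatically unitary; what remains is to identify them with the strong limits $\Omega^\pm$.

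Concretely, I would fix $\hat f\in C_0(A_r)$ and put $f=\boldsymbol{F_0}^{-1}\hat f$. Since $e^{itH}$ is unitary, it is equivalent to exhibit a vector $h$ with $\|e^{-itH_0}f-e^{-itH}h\|_{\ell^2(\Lambda_n)}\to 0$ as $t\to\pm\infty$, whence $\Omega^\pm f=h$. Both evolutions are oscillatory integrals of the form \eqref{dynamics}, i.e.\ finite sums over $(\sigma,\epsilon)\in S_n\times\{\pm 1\}^n$ of integrals over $A$ with phase $-tE_n(\xi)+\langle\rho+\lambda,\epsilon\xi_\sigma\rangle$. A stationary-phase evaluation concentrates each branch on the ray where $2t\sin\xi_k$ balances the corresponding signed component of $\rho+\lambda$; because $\hat f$ is supported in $A_r$ the stationary points are nondegenerate, and the constraint $\lambda\in\Lambda_n$ selects exactly the branch $(\epsilon_\xi,\sigma_\xi)$ of the theorem, for which the components of $\nabla E_n(\epsilon_\xi\xi_{\sigma_\xi})$ are positive and ordered from large to small (recall $\nabla E_n(\xi)=(-2\sin\xi_1,\ldots,-2\sin\xi_n)$). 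On that branch the parts of $\lambda$ are forced to be positive and mutually distinct, so $m_0(\lambda)=0$ and $m_\ell(\lambda)\le 1$ for all $\ell$, whence $\mathcal N(\lambda)=(c;q)_0\prod_\ell[m_\ell(\lambda)]!=1$ and the normalization drops out of the leading term.

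Matching the leading asymptotics then forces $\boldsymbol{F}\Omega^\pm f=\hat{\mathcal S}^{\mp 1/2}\hat f$: the interacting kernel differs from the free one precisely by the factor $\hat{\mathcal S}(\epsilon_\xi\xi_{\sigma_\xi})^{1/2}$ on the selected branch, and the opposite sign of the exponent between $t\to+\infty$ and $t\to-\infty$ reflects that the selected branch, and with it the unimodular amplitude, is conjugated. Extending from $C_0(A_r)$ by density and unitarity yields $\Omega^\pm=\Omega_r^\pm=\boldsymbol{F}^{-1}\circ\hat{\mathcal S}^{\mp 1/2}\circ\boldsymbol{F_0}$, and composing the two multipliers gives $\mathcal S=(\Omega_r^+)^{-1}\Omega_r^-=\boldsymbol{F_0}^{-1}\circ\hat{\mathcal S}\circ\boldsymbol{F_0}$. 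I expect the main obstacle to lie in the uniform control of the stationary-phase remainders: one must verify by repeated non-stationary integration by parts that the branches without a stationary point in $\Lambda_n$ contribute vanishing $\ell^2$-mass, and one must handle the singular behaviour of the amplitude $|C(\xi)|^{-1}$ near the walls of the alcove $A$. These estimates are exactly those established in \cite[\text{Thm.}~4.2 \text{and Cor.}~4.3]{die:scattering}, so it ultimately suffices to confirm that the explicit eigenvalue $E_n(\xi)$, scattering amplitude $\hat{\mathcal S}(\xi)$, and weight $\mathcal N(\lambda)$ of the present model satisfy the structural assumptions there.
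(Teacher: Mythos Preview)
Your proposal is correct and follows essentially the same route as the paper: the theorem is stated as an instance of the general stationary-phase result \cite[\text{Thm.}~4.2 and Cor.~4.3]{die:scattering}, once one observes that the kernels $\Psi_\xi$ and $\Psi_\xi^0$ share the same plane-wave structure and differ only by the unimodular amplitude $\hat{\mathcal S}^{1/2}$ and the normalization $\mathcal N(\lambda)^{-1/2}$. The paper does not supply any argument beyond this citation, so your sketch of the stationary-phase mechanism (in particular the observation that on the surviving branch $\mathcal N(\lambda)=1$) actually goes further than what is written there.
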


\appendix

\section{Pieri formula for Macdonald's hyperoctahedral Hall-Littlewood function}\label{appA}
Let $x:=(x_1,\ldots ,x_n)=(e^{i\xi_1},\ldots,e^{i\xi_n})$
and
$\tau:=(\tau_1,\ldots,\tau_n)$, where
$\tau_j=rq^{n-j}$ ($j=1,\ldots ,n$) with $r=\frac{a}{2}+\sqrt{(\frac{a}{2})^2-c}$ (cf. Eq. \eqref{od}).
Upon setting
\begin{equation}
P_\lambda (x):= \frac{\tau_1^{\lambda_1}\cdots \tau_n^{\lambda_n}}{\mathcal{N}(0)}\phi_\xi(\lambda)\qquad
(\lambda\in\Lambda_n),
\end{equation}
where $\mathcal{N}(0)$ is given by Eq. \eqref{norm} with $\lambda=0$, the
hyperoctahedral Hall-Littlewood function is renormalized to have
unital principal specialization values: $P_\lambda(\tau)=1$ ($\forall\lambda\in\Lambda_n)$ \cite[\S 10]{mac:orthogonal}.
With this normalization, the following Pieri formula holds:
\begin{eqnarray}\label{pieri}
\lefteqn{P_\lambda (x) \sum_{j=1}^n (x_j+x_j^{-1}-\tau_j-\tau_j^{-1})=} &&\\
&&\sum_{\substack{1\leq j\leq n\\ \lambda +e_j\in\Lambda_n}}
V_j^+(\lambda) \left( P_{\lambda +e_j}(x)-P_\lambda(x) \right)
+\sum_{\substack{1\leq j\leq n\\ \lambda -e_j\in\Lambda_n}}
V_j^-(\lambda) \left( P_{\lambda -e_j}(x)-P_\lambda(x) \right)  ,\nonumber
\end{eqnarray}
where
\begin{align*}
V_j^+(\lambda) &= \tau_j^{-1}\Bigl(\frac{1-c^2\delta_{\lambda_j}q^{2(n-j)}}{1+c\delta_{\lambda_j}q^{2(n-j)}}\Bigr)
\prod_{\substack{j<k\leq n\\ \lambda_k=\lambda_j}}
\Bigl(\frac{1-q^{1+k-j}}{1-q^{k-j}}\Bigr)\Bigl( \frac{1+c\delta_{\lambda_j}q^{1+2n-k-j}}{1+c\delta_{\lambda_j}q^{2n-k-j}}\Bigr),\\
V_j^-(\lambda) &= \tau_j \prod_{\substack{1\leq k<j\\ \lambda_k=\lambda_j}}
\Bigl(\frac{1-q^{1+j-k}}{1-q^{j-k}}\Bigr) .
\end{align*}

The formula in question is readily obtained through degeneration from an analogous Pieri formula for a $BC_n$-type Macdonald function that arises as a special case of the Pieri formulas in
\cite[\text{Sec}.~6.1]{die:properties}. Specifically, by substituting $t_2=q^{1/2}$, $t_3=-q^{1/2}$ (which amounts to a reduction from the Macdonald-Koornwinder function to the $BC_n$-type Macdonald function) in the Pieri formula of \cite[\text{Eqs}.~(6.4), (6.5)]{die:properties} with coefficients taken from \cite[\text{Eqs}.~(6.12), (6.13)]{die:properties}, the relation in Eq. \eqref{pieri} is retrieved for
$q\to 0$ (which corresponds to a transition from Macdonald type functions to Hall-Littlewood type functions).  Notice in this connection that the
parameters $q$, $a$, $c$ (and $r$) of the present paper are related to the parameters
$t$, $t_0$, $t_1$  of Ref. \cite{die:properties} via  $q=t$, $a=t_0+t_1$,
$c=t_0t_1$ (and $r=t_0$).

Since
\begin{equation*}
V_j^+(\lambda)=\tau_j^{-1}(1-c\delta_{\lambda_j}q^{m_0(\lambda)-1})[m_{\lambda_j}(\lambda)]  ,
\qquad  V_j^-(\lambda)= \tau_j [m_{\lambda_j}(\lambda)] ,
\end{equation*}
and
$$
\sum_{j=1}^n( \tau_j+\tau_j^{-1})
-\sum_{\substack{1\leq j \leq n\\ \lambda-e_j\in\Lambda_n}} \tau_j [m_{\lambda_j}(\lambda)]
-\sum_{\substack{1\leq j \leq n\\ \lambda+e_j\in\Lambda_n}}
 \tau_j^{-1} [m_{\lambda_j}(\lambda)]
 =r[m_0(\lambda)]  ,
$$
the Pieri formula \eqref{pieri} can be condensed into the more compact form
\begin{align}\label{pieri2}
P_\lambda (x) \sum_{j=1}^n (x_j+x_j^{-1})&=a[m_0(\lambda)] +\sum_{\substack{1\leq j\leq n\\ \lambda -e_j\in\Lambda_n}}
\tau_j [m_{\lambda_j}(\lambda)] P_{\lambda -e_j}(x) \\
&
+\sum_{\substack{1\leq j\leq n\\ \lambda +e_j\in\Lambda_n}}
 \tau_j^{-1} (1-c\delta_{\lambda_j}q^{m_0(\lambda)-1})[m_{\lambda_j}(\lambda)] P_{\lambda +e_j}(x)
 .\nonumber
\end{align}

\bibliographystyle{amsplain}

\end{document}